\newtheorem{theorem}{Theorem}
\begin{document}

\title{A type of localized quadripartite entanglement}
\author{Shao-xiong Wu, and Chang-shui Yu\footnote{quaninformation@sina.com;\quad ycs@dlut.edu.cn}}
\affiliation{School of Physics and Optoelectronic Technology, Dalian University of
Technology, Dalian 116024, China }
\date{\today}

\begin{abstract}
In this paper, we show that the average three-tangle of the reduced
tripartite density matrix for some quadripartite pure states can be
increased by some potential measurements on the fourth subsystem,
which means this type of quadripartite entanglement can be
localized. In particular, we prove that the maximal increment with
all potential measurements taken into account is a quadripartite
entanglement monotone, so it quantifies this localized quadripartite
entanglement. By analyzing quadripartite pure states based on the
previous classification, we find that this quadripartite
entanglement monotone is not only present in the standard GHZ state
and absent in the standard W states. In addition, based on the
proposed entanglement measure, we construct a new monogamy relation.
\end{abstract}

\pacs{03.67.Mn, 03.65.Ud}
\maketitle

\section{Introduction}

Quantum entanglement is an important physical resource in quantum
information processing \cite{rmp}. Quantification of entanglement is
one of the most important tasks in the quantum information theory.
However, although it has been more than ten years since the first
several remarkable works on the bipartite entanglement
\cite{peres,Horodecki,wott97,wott98}, we can safely say that the
good understanding of quantum entanglement is still restricted to
bipartite low-dimensional systems(see the Ref. \cite{rmp} and their
references). The quantification of the multipartite entanglement
remains an open question \cite{concurrence}.

One of the important difficulties in the quantification of
multipartite entanglement is that multipartite quantum entanglement
can be classified into different inequivalent classes
\cite{santifenlianglei,Fenlei}, which implies that 1) in general a
single quantity is not enough to characterize the multipartite
quantum entanglement; 2) there exist different approaches to
classifying the multipartite entanglement. However, that isn't to
say that a single quantify is useless, because we can always use a
single quantity to quantify the entanglement of some given types.
The most obvious example is that, although tripartite quantum pure
states of qubits can be entangled in two different ways, 3-tangle is
a quite successful measure of GHZ type entanglement \cite{ckw}. In
addition, a good entanglement measure can not only distinguish the
given entangled state from others (in usual, the separable states),
but also should be an entanglement monotone which is not increased
under stochastic local operations and classical
communications(SLOCC) \cite{monotone}. Therefore, although many
authors have classified the multipartite quantum states into various
inequivalent classes, they can not provide a good and direct
entanglement measure for each class of quantum states
\cite{ustccao,Japan,Japan1,fanxianxing,duoxiangshi,lidafa,symmetric,String,Spain,ntangle}.
In particular, in order to measure the GHZ type entanglement of
multipartite quantum states, some authors defined an n-tangle for
multipartite quantum pure states to generalize the 3-tangle
\cite{ntangle}. However, it is regretful that for quadripartite
quantum states of qubits, the so-called n-tangle has value one for
the product of two Bell states, even though the n-tangle is an
entanglement monotone.

In usual, the classification of multipartite entanglement is based
on the SLOCC. In this sense, tripartite pure states of qubits can be
classified into GHZ state and W state \cite{santifenlianglei} and
quadripartite pure states of qubits can be classified into nine
families \cite{Fenlei}. That means that a general quantum
quadripartite pure state can be converted into the standard quantum
state of the particular family by SLOCC. However, a particular
characteristic of the tripartite GHZ states and the tripartite W
states is that some measurements on one qubit of GHZ state could
increase the average entanglement of the residual two qubits. But
one can not find any measurements on one qubit of W state such that
the average entanglement of the residual two qubits is increased. In
other words, the tripartite entanglement in GHZ state can be
localized to the bipartite entanglement if the SLOCC is available,
but the tripartite entanglement in W state has no such property. Can
we consider the quadripartite entanglement in the same manner?

In this paper, we consider the above question by studying the
quadripartite quantum pure states of qubits. Based on whether the
quadripartite entanglement can be localized under SLOCC, i.e. whether some measurements can increase the average GHZ type entanglement of the residual three quits, we propose
a quadripartite entanglement monotone for $\left( 2\otimes 2\otimes
2\otimes n\right)$-dimensional quantum pure states. Comparing to the
conventional classification of quadripartite quantum pure states of
qubits, we find that not only the standard quadripartite GHZ state can be
localized and the W state is not the unique quadripartite entangled
state which can not be localized.  In addition, based on the quadripartite entanglement
measure, we find an interesting monogamy relation. This paper is organized as follows. In Sec. II, we briefly
introduce the fundamental definitions of various entanglement that
will be used in the paper. In Sec. III, we give our entanglement
measure and prove that it is an entanglement monotone. In Sec. IV, compared with the previous classification, we
analyze  our entanglement measure on each family. In Sec V, we give a new monogamy relationship for
quadripartite pure states. Finally, the conclusion is drawn.

\section{Definitions}
The  three-tangle $\mu _{3}(\left\vert \psi \right\rangle )$ for a
tripartite pure state of qubits $|\psi \rangle
=\sum_{i,j,k=0}^1a_{ijk}\left\vert ijk\right\rangle $ can be given
by \cite{ckw}
\begin{equation}
\mu _{3}=4|d_{1}-2d_{2}+4d_{3}|,
\end{equation}%
with
\begin{eqnarray}
d_{1}&=&a_{000}^{2}a_{111}^{2}+a_{001}^{2}a_{110}^{2}+a_{010}^{2}a_{101}^{2}+a_{100}^{2}a_{011}^{2},\cr
d_{2}&=&a_{000}a_{111}a_{011}a_{100}+a_{000}a_{111}a_{101}a_{010}
\cr &+&a_{000}a_{111}a_{110}a_{001}+a_{011}a_{100}a_{101}a_{010} \cr
&+&a_{011}a_{100}a_{110}a_{001}+a_{101}a_{010}a_{110}a_{001}, \cr
d_{3} &=&a_{000}a_{110}a_{101}a_{011}+a_{111}a_{001}a_{010}a_{100}.
\end{eqnarray}%
It is easy to find that $\tau _{3}=\sqrt{\mu _{3}}$ is also a good
entanglement measure \cite{santifenlianglei}. Thus for a tripartite mixed state $\rho $,
the 3-tangle can be defined, based on the convex-roof-construction,
as
\begin{equation}
\tau _{3}\left( \rho \right) =\min_{\{p_{i},\pi _{i}\}}\sum_{i}p_{i}\tau
_{3}(\pi _{i}),
\end{equation}%
with $\{p_{i},\pi _{i}\}$ is a possible realization of $\rho $. In
particular, if for a quadripartite $\left( 2\otimes 2\otimes
2\otimes n\right) $-dimensional quantum pure state $\left\vert \psi
\right\rangle _{ABCS}$, $\rho =Tr_{S}\left\vert \psi \right\rangle
_{ABCS}\left\langle \psi \right\vert $, we can define the 3-tangle
of assistance as \cite{renxijun}
\begin{equation}
\tau _{a}\left( \rho \right) =\max_{\{p_{i},\pi
_{i}\}}\sum_{i}p_{i}\tau _{3}(\pi _{i}).      \label{fuzhu3tangle}
\end{equation}%
Since for a tripartite quantum pure state of qubits $\left\vert \phi
\right\rangle ,$
\begin{equation}
\tau _{3}(A\otimes B\otimes C\left\vert \phi \right\rangle
)=\left\vert \det \left( A\right) \right\vert \left\vert \det \left(
B\right) \right\vert \left\vert \det \left( C\right) \right\vert
\tau _{3}(\left\vert \phi \right\rangle )    \label{zhibubian}
\end{equation}%
holds for any local operations $A$, $B$ and $C$, a similar proof as
Ref. \cite{G.Gour} can show that the three-tangle of assistance defined
in Eq. (\ref{fuzhu3tangle}) is a quadripartite entanglement
monotone.

\section{Quadripartite entanglement monotone}

Let $\left\vert \psi \right\rangle _{ABCS}$ be a quadripartite ($2\otimes
2\otimes 2\otimes n$)-dimensional quantum pure state, $\rho
_{ABC}=Tr_{S}\left\vert \psi \right\rangle _{ABCS}\left\langle \psi
\right\vert $, based on the definitions we can easily find that $\tau
_{a}(\rho _{ABC})\geq \tau _{3}(\rho _{ABC})$. Define

\begin{equation}
\tau _{4}\left( \left\vert \psi \right\rangle _{ABCS}\right) =\tau
_{4}\left[ \rho _{ABC}\right] =\sqrt{\tau _{a}^{2}(\rho _{ABC})-\tau
_{3}^{2}(\rho _{ABC})}  \label{diyigedanpei}
\end{equation}%
we can find that $\tau _{4}\left( \left\vert \psi \right\rangle
_{ABCS}\right) $ has the following properties.

\bigskip i) $\tau _{4}\left( \left\vert \psi \right\rangle _{ABCS}\right) $
is invariant under determinant-one operations on subsystems A, B and C.

ii) $\tau _{4}\left( \left\vert \psi \right\rangle _{ABCS}\right) $ is a
concave function on $\rho _{ABC}$.

One can find that the property i) is very obvious based on Eq.
(\ref{zhibubian}). Now we will prove the property ii). Let $\lambda
\in \lbrack 0,1]$ and $\rho _{1},\rho _{2}$ be the tripartite
quantum states of qubits \cite{Yu}, then
\begin{eqnarray}
&&\lambda \tau _{4}\left[ \rho _{1}\right] +(1-\lambda )\tau _{4}\left[ \rho
_{2}\right]  \notag \\
&=&\lambda \sqrt{\left[ \tau _{a}(\rho _{1})-\tau _{3}(\rho _{1})\right] %
\left[ \tau _{a}(\rho _{1})+\tau _{3}(\rho _{1})\right] }  \notag \\
&&+(1-\lambda )\sqrt{\left[ \tau _{a}(\rho _{2})-\tau _{3}(\rho _{2})\right] %
\left[ \tau _{a}(\rho _{2})+\tau _{3}(\rho _{2})\right] }  \notag \\
&\leq &\sqrt{\lambda \left[ \tau _{a}(\rho _{1})-\tau _{3}(\rho _{1})\right]
+(1-\lambda )\left[ \tau _{a}(\rho _{2})-\tau _{3}(\rho _{2})\right] }
\notag \\
&&\times \sqrt{\lambda \left[ \tau _{a}(\rho _{1})+\tau _{3}(\rho _{1})%
\right] +(1-\lambda )\left[ \tau _{a}(\rho _{2})+\tau _{3}(\rho _{2})\right]
}  \notag \\
&=&\sqrt{\left[ \lambda \tau _{a}(\rho _{1})+(1-\lambda )\tau _{a}(\rho _{2})%
\right] ^{2}-\left[ \lambda \tau _{3}(\rho _{1})+(1-\lambda )\tau _{3}(\rho
_{2})\right] ^{2}}  \notag \\
&\leq &\sqrt{\tau _{a}^{2}\left[ \lambda \rho _{1}+(1-\lambda )\rho _{2}%
\right] -\tau _{3}^{2}\left[ \lambda \rho _{1}+(1-\lambda )\rho _{2}\right] }%
\label{convex}
\end{eqnarray}%
where the first inequality holds based on Cauchy-Schwarz
inequality:$\sum_i{x_iy_i} \leq
(\sum_i{x_i^2})^{1/2}(\sum_j{y_j^2})^{1/2}$ and the second
inequality holds based on the definition of $\tau _{a}$ and $\tau
_{3} $.

\begin{theorem}
$\tau _{4}\left( \left\vert \psi \right\rangle _{ABCS}\right) $ is an
quadripartite entanglement monotone. It can measure one type
of localized quadripartite entanglement.
\end{theorem}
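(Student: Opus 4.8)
The plan is to show that $\tau_4$ satisfies the defining property of an entanglement monotone: it must not increase, on average, under any local operations and classical communication (LOCC), and in particular under SLOCC acting on the four subsystems $A$, $B$, $C$, and $S$. Since $\tau_4$ is defined on a pure state through the reduced state $\rho_{ABC}$, the natural strategy is to treat the two types of local operations separately: operations on the three qubits $A$, $B$, $C$, and operations (measurements) on the fourth subsystem $S$.

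**First I would** dispose of the operations on $A$, $B$, $C$. Property i) already tells us that $\tau_4$ is invariant under determinant-one (SLOCC) operations on these subsystems, because both $\tau_a$ and $\tau_3$ scale by the same factor $|\det A|\,|\det B|\,|\det C|$ under Eq.~(\ref{zhibubian}), and this common factor cancels in the difference under the square root in Eq.~(\ref{diyigedanpei}). A general local operation on a qubit can be written as a determinant-one operation composed with a scalar, so a genuine (trace-preserving, probabilistically realized) local measurement on $A$, $B$, or $C$ yields outcomes whose $\tau_4$ values, weighted by their probabilities, do not exceed the original value. **The key step here** is the standard argument (as in the cited Ref.~\cite{G.Gour}) converting the multiplicativity of Eq.~(\ref{zhibubian}) under the polar/scalar decomposition of the Kraus operators into the monotonicity-on-average inequality.

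**The main work, and the hard part, will be** the measurements on the fourth subsystem $S$. A measurement on $S$ decomposes $\rho_{ABC}$ as an ensemble $\{q_k,\rho_{ABC}^{(k)}\}$, i.e. $\rho_{ABC}=\sum_k q_k\,\rho_{ABC}^{(k)}$, where each $\rho_{ABC}^{(k)}$ is the (renormalized) reduced state conditioned on outcome $k$. Monotonicity under such a refinement is exactly the statement that
\begin{equation}
\sum_k q_k\,\tau_4\!\left[\rho_{ABC}^{(k)}\right]\;\le\;\tau_4\!\left[\rho_{ABC}\right].
\end{equation}
But this is precisely the concavity of $\tau_4$ as a function on $\rho_{ABC}$, which is property ii) established in Eq.~(\ref{convex}): concavity gives $\sum_k q_k\,\tau_4[\rho_{ABC}^{(k)}]\le\tau_4[\sum_k q_k\rho_{ABC}^{(k)}]=\tau_4[\rho_{ABC}]$. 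So the obstacle is reduced to invoking the already-proved concavity together with the fact that every measurement on $S$ realizes $\rho_{ABC}$ as a convex combination of the post-measurement reduced states.

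**Combining these two pieces** establishes that $\tau_4$ does not increase on average under local operations on any of the four parties, hence it is a quadripartite entanglement monotone. Finally, for the interpretive claim that $\tau_4$ measures a type of \emph{localized} entanglement, I would note that $\tau_a(\rho_{ABC})$ is by definition the maximal average three-tangle achievable over all measurements on $S$, while $\tau_3(\rho_{ABC})$ is the entanglement already present without assistance; their (squared) difference under the root therefore quantifies exactly the increment in residual GHZ-type tripartite entanglement that can be \emph{localized} onto $A$, $B$, $C$ by optimally measuring the fourth subsystem, which is the operational meaning advertised in the statement.
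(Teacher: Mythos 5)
Your proposal follows essentially the same route as the paper's own proof: monotonicity under local operations on $A$, $B$, $C$ is obtained from the determinant scaling of Eq.~(\ref{zhibubian}) together with the arithmetic--geometric bound $\sum_j\left\vert \det A_j\right\vert \le \frac{1}{2}\sum_j \mathrm{Tr}\,A_jA_j^{\dag}=1$, and monotonicity under measurements on $S$ is exactly the concavity of property ii) applied to the ensemble decomposition of $\rho_{ABC}$. The only piece of the paper's argument you omit is the explicit verification that $\tau_4$ vanishes on the three kinds of separable quadripartite pure states (which the authors include to support the claim that it is a \emph{measure}); the monotonicity argument itself coincides with theirs.
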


\begin{proof}
We will first prove $\tau _{4}\left( \left\vert \psi \right\rangle
_{ABCS}\right) $ does not increase under SLOCC operations.  We
suppose the pure state is shared by A, B, C and S. Let us consider
the following four-way SLOCC. Firstly, S performs a measurement
denoted by the Kraus operators $M_{i} $ on the qudit S and send the
result $i$ to others; Secondly, based on the result of S, A performs
a measurement $A_{j}^{i}$ on qubit A and send his result $j$ to
others; Thirdly, based on the previous results of A and S, Bob
performs a measurement $B_{k}^{ij}$ on qubit B and send the result
$k$ to others, and similarly, based on the results A, B and S,
Charlie performs a measurement $C_{l}^{ijk}$ on qubit C and send the
result $l$ to others;
Finally, based on all the previous results, S performs a measurement $%
F_{m}^{ijkl}$ on qudit S and send his result $m$ to others. After
the SLOCC operations, the initial state $\left\vert \psi
\right\rangle _{ABCS}$ will become a mixed state as
$\{N_{ijklm},\left\vert \psi \right\rangle _{ABCS}^{ijklm}\}$, where
\begin{equation}
\left\vert \psi \right\rangle _{ABCS}^{ijklm}=\frac{A_{j}^{i}\otimes
B_{k}^{ij}\otimes C_{l}^{ijk}\otimes F_{m}^{ijkl}M_{i}}{\sqrt{N_{ijklm}}}%
\left\vert \psi \right\rangle _{ABCS}
\end{equation}%
and $N_{ijklm}$ is the corresponding probability. Therefore, we can obtain
the average $\tau _{4}$ as%
\begin{gather}
\sum_{ijklm}N_{ijklm}\tau _{4}\left( \left\vert \psi \right\rangle
_{ABCS}^{ijklm}\right) =\sum_{ijklm}D_{ijkl}\tau _{4}\left(
F_{m}^{ijkl}M_{i}\left\vert \psi \right\rangle _{ABCS}\right)   \notag \\
=\sum_{ijklm}D_{ijkl}\tau _{4}\left[
Tr_{S}F_{m}^{ijkl}M_{i}\left\vert \psi \right\rangle
_{ABCS}\left\langle \psi \right\vert M_{i}^{\dag }\left(
F_{m}^{ijkl}\right) ^{\dag }\right] ,   \label{rhs}
\end{gather}%
with $D_{ijkl}=\left\vert \det \left( A_{j}^{i}\right) \right\vert
\left\vert \det \left( B_{k}^{ij}\right) \right\vert \left\vert \det \left(
C_{l}^{ijk}\right) \right\vert $. Let $\tilde{\rho}_{i}=Tr_{S}M_{i}\left%
\vert \psi \right\rangle _{ABCS}\left\langle \psi \right\vert M_{i}^{\dag }$%
. Since the reduced density matrix seen from A, B and C can not be
changed by the local operations $F_{m}^{ijkl}$ alone, we have $\tilde{\rho}%
_{i}=\sum_{m}\sigma _{m}$, with $\sigma
_{m}=Tr_{S}F_{m}^{ijkl}M_{i}\left\vert \psi \right\rangle
_{ABCS}\left\langle \psi \right\vert M_{i}^{\dag }\left( F_{m}^{ijkl}\right)
^{\dag }$. Based on the property ii), one can easily find that the right
hand side (r.h.s.) of Eq. (\ref{rhs}) can lead to the following inequality:%
\begin{equation}
r.h.s\leq \sum_{ijklm}D_{ijkl}\tau _{4}\left[ Tr_{S}M_{i}\left\vert \psi
\right\rangle _{ABCS}\left\langle \psi \right\vert M_{i}^{\dag }\right] .
\end{equation}%
According to the geometric-arithmetic inequality: $\sum\limits_{j}\left\vert
\det A_{j}^{i}\right\vert \leq \frac{1}{2}\sum\limits_{j}TrA_{j}^{i}A_{j}^{i%
\dag }=1$ and the similar inequalities for $B_{k}^{ij}$ and $C_{l}^{ijk}$,
we can get%
\begin{eqnarray}
r.h.s &\leq &\sum_{i}\tau _{4}\left[ Tr_{S}M_{i}\left\vert \psi
\right\rangle _{ABCS}\left\langle \psi \right\vert M_{i}^{\dag }\right]
\notag \\
&\leq &\tau _{4}\left[ \rho _{ABC}\right]   \label{zhengmingdandiao}
\end{eqnarray}%
where the second inequality holds similar to the Eq. (\ref{convex}).
Eq. (\ref{zhengmingdandiao}) shows that $\tau _{4}$ is a
quadripartite entanglement monotone (similar as Ref. \cite{G.Gour}.)

In addition, we claim that our entanglement monotone is a quadripartite
entanglement measure, we will prove that $\tau _{4}$ vanishes for any
separable state. A quadripartite separable quantum pure state can be given
as $\left\vert \varphi _{1}\right\rangle _{ABCD}=\left\vert \chi
_{1}\right\rangle _{A}\otimes \left\vert \gamma _{1}\right\rangle _{BCD}$, $%
\left\vert \varphi _{2}\right\rangle _{ABCD}=\left\vert \chi
_{2}\right\rangle _{AB}\otimes \left\vert \gamma _{2}\right\rangle _{CD}$
and $\left\vert \varphi _{3}\right\rangle _{ABCD}=\left\vert \chi
_{3}\right\rangle _{ABC}\otimes \left\vert \gamma _{3}\right\rangle _{D}$
where the state $\left\vert \cdot \right\rangle $ with multipartite
subscripts out of the ket denotes a general multipartite quantum pure state,
so a completely separable state is also included. For $\left\vert \varphi
_{1}\right\rangle _{ABCD}$, if we trace out the subsystem $D$, the final
reduced density can be given by $\sigma _{1}^{ABC}=\left\vert \chi
_{1}\right\rangle _{A}\left\langle \chi _{1}\right\vert \otimes \varrho
_{BC} $. Hence, we can find $\tau _{a}(\sigma _{1}^{ABC})=\tau _{3}(\sigma
_{1}^{ABC})=0$, which shows $\tau _{4}\left( \left\vert \varphi
_{1}\right\rangle _{ABCD}\right) =0$. Similarly, one can find that $\tau
_{4}\left( \left\vert \varphi _{2}\right\rangle _{ABCD}\right) =0$. For $%
\left\vert \varphi _{3}\right\rangle _{ABCD}$, when we trace out the
subsystem D, we can obtain the reduced density matrix as $\sigma
_{1}^{ABC}=\left\vert \chi _{3}\right\rangle _{ABC}\left\langle \chi
_{3}\right\vert $. So it is easy to find that $\tau _{a}(\sigma
_{1}^{ABC})=\tau _{3}(\sigma _{1}^{ABC})$, which also means that
$\tau _{4}\left( \left\vert \varphi _{3}\right\rangle _{ABCD}\right)
=0$.

One can directly find that our Eq. (\ref{diyigedanpei}) actually
implies a relationship of monogamy, which shows that $\tau _{4}$ can
lead to the increasing of the average 3-tangle (the maximum is the
3-tangle of assistance). The definition of $\tau _{4}$ obvioulsy
shows that the entanglement quantified by $\tau _{4} $ can be
localized.
\end{proof}

\section{Analyze the quantum states of different families using the entanglement measure}

We have shown that $\tau _{4}$ can quantify one type of localized quadripartite
entanglement which is the generalization of some property of the
tripartite GHZ state. After all, there are not only the GHZ state
for quadripartite states. So in this section, we would like to
discuss which states of four qubits hold this type of so-called localized
entanglement. As we know, quadripartite quantum pure state of qubits
has nine families up to the permutation of the qubits in the sense
of SLOCC. The standard state of each family can be given as
follows \cite{Fenlei}.
\begin{eqnarray}
G_{abcd} &=&\frac{a+d}{2}(|0000\rangle +|1111\rangle )+\frac{a-d}{2}
(|0011\rangle +|1100\rangle )   \cr
 &+&\frac{b+c}{2}(|0101\rangle
+|1010\rangle )+\frac{b-c}{2}(|0110\rangle +|1001\rangle ), \cr
L_{abc_{2}} &=&\frac{a+b}{2}(|0000\rangle +|1111\rangle
)+\frac{a-b}{2} (|0011\rangle +|1100\rangle ) \cr &+&c(|0101\rangle
+|1010\rangle )+|0110\rangle , \cr
 L_{a_{2}b_{2}} &=&a(|0000\rangle +|1111\rangle
)+b(|0101\rangle +|1010\rangle )   \cr
 &+&|0110\rangle +|0011\rangle
, \cr
 L_{ab_{3}} &=&a(|0000\rangle
+|1111\rangle )+\frac{a+b}{2}(|0101\rangle +|1010\rangle )  \cr
&+&\frac{i}{\sqrt{2}}(|0001\rangle +|0010\rangle +|0111\rangle
+|1011\rangle ) \cr &+&\frac{a-b}{2}(|0110\rangle +|1001\rangle ),
\cr
 L_{a_{4}}
&=&a(|0000\rangle +|0101\rangle +|1010\rangle +|1111\rangle ) \notag
\cr
 &+&(i|0001\rangle +|0110\rangle
-|1011\rangle ), \cr L_{a_{2}0_{3\oplus 1}} &=&a(|0000\rangle
+|1111\rangle )+(|0011\rangle +|0101\rangle +|0110\rangle ), \cr
 L_{0_{5\oplus
\bar{3}}} &=&|0000\rangle +|0101\rangle +|1000\rangle +|1110\rangle
, \cr
 L_{0_{7\oplus \bar{1}}} &=&|0000\rangle
+|1011\rangle +|1101\rangle +|1110\rangle , \cr
 L_{0_{3\oplus \bar{1}}0_{3\oplus \bar{1}}}
&=&|0000\rangle +|0111\rangle .
\end{eqnarray}%

It is shown that any a quadripartite quantum state of qubits can be
converted into one of the above standard states by SLOCC. For the
convenience, we will study our proposed localized entanglement by tracing the
first qubit. In particular, we will show what kind of standard
quantum states have no this type entanglement.

For $G_{abcd}$, one can find that $\tau_{4}(G_{{abcd}})=0$ for $%
|a|=|b|=|c|=|d|$ or any three of the four parameters vanish. In these cases,
$G_{abcd}$ will become separable states. The state including two EPR pairs
belongs to this family with three vanishing parameters. For $L_{abc_2}$, it
is easy to see that $\tau_{4}(L_{{abc_{2}}})=0$ for $|a|=|c|$ or $|b|=|c|$
which shows that $L_{abc_2}$ is not a separable state but a special
quadripartite quantum state. For $L_{a_2b_2}$, if $|a|=|b|$, then $%
\tau_{4}(L_{{a_{2}b_{2}}})=0$. It is interesting that if $a=b=0$, $%
L_{a_2b_2} $ denotes a separable state, otherwise, it is a quadripartite
entangled state. For $L_{ab_3}$, one can find that $\tau_4(L_{a b_3})$
vanishes for $|a|=|b|$, even in this case, $L_{a b_3}$ is not separable. In
particular, can find that the quadripartite W state $\left\vert
W_4\right\rangle=\frac{1}{2}\left(\left\vert 0001\right\rangle+\left\vert
0010\right\rangle +\left\vert 0100\right\rangle +\left\vert 1000\right
\rangle\right)$ belongs to this family with $a=b=0$. So $\tau_4(\left\vert
W_4\right\rangle)=0$. If $a=0$, one can show that $\tau_{4}(L_{a_4})$ and $%
\tau_{4}(L_{a_20_{3\oplus1}})$ vanish. But in this case, $L_{a_4}$ is not
separable, but $L_{a_20_{3\oplus1}}$ is separable. In addition, we can find
that $L_{0_{5\oplus\bar{3}}}$ and $L_{0_{3\oplus\bar{1}}0_{3\oplus\bar{1}}}$
have no this type of localized entanglement, even though $L_{0_{5\oplus\bar{3}}}$ is not
separable. On the contrary, $L_{0_{7\oplus\bar{1}}}$ always has nonzero the localized quadripartite
entanglement.

Thus we have divided the nine families into two parts based on our
requirements. However, one could ask a natural question whether our measure
depends on the permutation of qubits. It is unfortunate that our definition
strongly depends on which qubit is traced out. A simple example is the state
$L_{0_{7\oplus \bar{1}}}$. One can find that $\tau _{4}$ for this state will
vanish if we trace out the second, the third or the fourth qubit, even
though we have shown it is not the case if we trace out the first qubit.
Therefore, from the point of entanglement measure of view, we have to use a
vector to completely characterize the type of localized entanglement of a
quadripartite pure state. Here one can define an entanglement vector as $%
\tau _{4}=[\tau _{(A)BCD},\tau _{A(B)CD},\tau _{AB(C)D},\tau _{ABC(D)}]$,
with bracket in subscripts denoting trace over the corresponding subsystem.
Based on the entanglement vector, one can further consider what type the
entanglement of a quadripartite pure state is.

\section{a new monogamy}

Besides the definition which has presented a monogamy relation, we can find another new monogamy. For a pure state $\left\vert \phi \right\rangle _{ABCD}$, consider an
optimal decomposition $\{p_{i},\left\vert \psi _{BCD}\right\rangle _{i}\}$
of $\rho _{BCD}=Tr_{A}\left\vert \phi \right\rangle \left\langle \phi
\right\vert $ such that $\tau _{a}(\rho _{BCD})=\sum\limits_{i}p_{i}\tau
_{3}(\left\vert \psi _{BCD}\right\rangle _{i})$, we will have%
\begin{equation}
\tau _{(A)BCD}=\sqrt{\left[ \sum\limits_{i}p_{i}\tau _{3}(\left\vert
\psi _{BCD}\right\rangle _{i})\right] ^{2}-\tau _{3}^{2}(\rho
_{BCD})}.\label{tua-abcd}
\end {equation}%
Since for each tripartite pure state $\left\vert \psi
_{BCD}\right\rangle $, we have\cite{Yu}
\begin{equation}
\tau _{3}(\left\vert \psi _{BCD}\right\rangle =\sqrt{C_{a}^{2}(\rho
_{CD})-C^{2}(\rho _{CD})},\label{tau3}
\end{equation}%
where $\rho _{CD}=Tr_{B}\left\vert \psi _{BCD}\right\rangle \left\langle
\psi _{BCD}\right\vert $ and $C(\rho _{CD})$ is the concurrence and $%
C_{a}(\rho _{CD})$ is the concurrence of assistance. Substitute
Eq. (\ref{tau3}) into Eq. (\ref{tua-abcd}), we will obtain%
\begin{eqnarray}
\tau _{(A)BCD}^{2}+\tau _{3}^{2}(\rho _{BCD}) &=&\left[ \sum\limits_{i}p_{i}%
\sqrt{C_{a}^{2}(\rho _{CD}^{i})-C^{2}(\rho _{CD}^{i})}\right] ^{2}  \notag \\
&\leq &C_{a}^{2}(\rho _{CD})-C^{2}(\rho _{CD})  \notag \\
&=&\tau _{3}^{2}(\left\vert \phi \right\rangle _{[AB]CD})
\label{monogamy}
\end{eqnarray}%
where $\tau _{3}(\left\vert \phi \right\rangle _{[AB]CD})$ denotes
the 3-tangle by considering subsystems A and B as one party. In
addtition, the inequality holds based on Cauchy-Schwarz inequality.
This inequality is not trivial, because an intuitional observation
can show that the inequality can be saturated if $\left\vert \phi
\right\rangle _{[AB]CD}$ is the
standard quadripartite GHZ state $\left\vert GHZ\right\rangle _{4}=\frac{1}{%
\sqrt{2}}\left( \left\vert 0000\right\rangle +\left\vert
1111\right\rangle \right) $. Thus Eq. (\ref{monogamy}) provides us
with a new inequality of monogamy. Analogously,
 we can also obtain other type inequalities of monogamy by considering the trace over
 different subsystems. These inequalities have the similar nature, so we omit them.

\section{conclusions and discussion}

In this paper, we have found a quadripartite entanglement monotone which quantifies one
type of localized quadripartite entanglement. That is, for one kind of quadripartite quantum pure state, the proposed localized quadripartite entanglement can converted to 3 subsystems by some measurements on the fourth subsystem. Along this line, we find that not only the standard GHZ state of four quits have nonzero localized entanglement and not only the standard W state has no localized entanglement, which is quite  different from the case of tripartite pure states. However, it is actually consistent with that quadripartite quantum entanglement can be classified into more than 2 classes. In addition, a new monogamy besides the definition itself has also been proposed.
At last, we would like to emphasize it is a quite interesting question what kind of other monogamy relations can be constructed based on the current localized entanglement measure or some others similar.

\section{Acknowledgement}

This work was supported by the National Natural Science Foundation of China,
under Grant No. 10805007 and No. 10875020, and the Doctoral Startup
Foundation of Liaoning Province.

\end{document}